\newtheorem{theorem}{Theorem}
\newtheorem{proposition}{Proposition}
\newtheorem{lemma}{Lemma}
\newtheorem{defn}{Definition}[section]
\def\mathunderline#1#2{\color{#1}\underline{{\color{black}#2}}\color{black}}
\def\mathdashuline#1#2{\color{#1}\dashuline{{\color{black}#2}}\color{black}}
\def\mathdotuline#1#2{\color{#1}\dotuline{{\color{black}#2}}\color{black}}
\title{Odd and even derivations, transposed Poisson superalgebra and 3-Lie superalgebra}
\author{Viktor Abramov and Nikolai Sovetnikov\corauthref{Corresponding author, nikolai.sovetnikov@ut.ee}}
\address{Institute of Mathematics and Statistics, University of Tartu, Narva mnt 18, 51009 Tartu, Estonia}
\abstract{One important example of a transposed Poisson algebra can be constructed by means of a commutative algebra and its derivation. This approach can be extended to superalgebras, that is, one can construct a transposed Poisson superalgebra given a commutative superalgebra and its even derivation. In this paper we show that including odd derivations in the framework of this approach requires introducing a new notion. It is a super vector space with two operations that satisfy the compatibility condition of transposed Poisson superalgebra. The first operation is determined by a left supermodule over commutative superalgebra and the second is a Jordan bracket. Then it is proved that the super vector space generated by an odd derivation of a commutative superalgebra satisfies all the requirements of introduced notion. We also show how to construct a 3-Lie superalgebra if we are given a transposed Poisson superalgebra and its even derivation.}
\keywords{Commutative superalgebra; Lie superalgebra; transposed Poisson superalgebra; 3-Lie superalgebra; Jordan algebra and Jordan module}
\begin{document}

\maketitle

\section{INTRODUCTION}
One of the most important structures of the mathematical methods of Hamiltonian mechanics is a Poisson bracket. If the coordinates of $2n$-dimensional canonical phase space are denoted by $(p_i,q_i)$, where integer $i$ runs from 1 to $n$, then the Poisson bracket of two smooth functions $f,g$ can be written as follows
\begin{equation}
\{f,g\}=\sum_{i=1}^n\;\Big(\frac{\partial f}{\partial q_i}\frac{\partial g}{\partial p_i}-\frac{\partial f}{\partial p_i}\frac{\partial g}{\partial q_i}\Big).
\label{canonical Poisson bracket}
\end{equation}
From (\ref{canonical Poisson bracket}) it follows that the Poisson bracket is skew-symmetric, that is, $\{f,g\}=-\{g,f\}$ and it satisfies the Jacobi identity 
$$
\big\{\{f,g\},h\big\}+\big\{\{g,h\},f\big\}+\big\{\{h,f\},g\big\}=0.
$$
Thus the vector space of smooth functions endowed with the Poisson bracket (\ref{canonical Poisson bracket}) is a Lie algebra \cite{Arnold,Hall}. But there is another structure on the vector space of smooth functions. If we consider pointwise product of two smooth functions $fg$, then the vector space of smooth functions becomes an associative commutative algebra and we have 
\begin{equation}
\{fg,h\}=f\{g,h\}+\{f,h\}g.
\label{differentiation by Poisson bracket}
\end{equation}
The Poisson bracket of two smooth functions defined on a phase space leads to an algebraic structure called a Poisson algebra. A Poisson algebra $(P,\cdot,\{\;,\;\})$ is a vector space $P$ with two binary operations, where $(P,\cdot)$ is a commutative associative algebra and $(P,\{\;,\;\})$ is a Lie algebra. Additionally these two binary operations satisfy the relation
\begin{equation}
\{z,x\cdot y\}=\{z,x\}\cdot y+x\cdot\{z,y\},\;\;x,y,z\in P,
\label{intr-comp-Poisson}
\end{equation}
which is usually called a compatibility condition.

A notion of transposed Poisson algebra was introduced and studied in \cite{Bai-Guo-Wu}. By definition a transposed Poisson algebra $({\cal P},\cdot, \{\;,\;\})$, like a Poisson algebra, is a vector space ${\cal P}$ with two binary operations, where the first is a commutative associative multiplication $(x,y)\in{\cal P}\times{\cal P}\to x\cdot y\in {\cal P}$ and the second is a Lie bracket $(x,y)\in{\cal P}\times{\cal P}\to \{x,y\}\in {\cal P}$. But a compatibility condition in the case of a transposed Poisson algebra has a form different from the compatibility condition (\ref{intr-comp-Poisson}) for a Poisson algebra. The compatibility condition in the case of a transposed Poisson algebra has the form
\begin{equation}
2\,z\cdot\{x,y\}=\{z\cdot x,y\}+\{x,z\cdot y\}.
\label{intr-trans-comp-condition}
\end{equation}
The compatibility condition of a transposed Poisson algebra shows that for any element $x$ of a transposed Poisson algebra the operator of multiplication from the left $L_x(y)=x\cdot y$ is similar to derivation of a Lie bracket (if we do not take into account the factor 2 on the left). In this sense condition (\ref{intr-trans-comp-condition}) is the transpose of condition (\ref{intr-comp-Poisson}), and this makes it appropriate to use the term "transposed" in the name of an algebra. An example of a transposed Poisson algebra can be constructed by means of a derivation of a commutative associative algebra \cite{Bai-Guo-Wu}. If $(A,\cdot)$ is a commutative associative algebra and $D:A\to A$ is a derivation then the bracket
\begin{equation}
[x,y]=x\cdot D(y)-y\cdot D(x),\;\;x,y\in A,\label{intr-D-bracket}
\end{equation}
is a Lie bracket and it satisfies the compatibility condition (\ref{intr-trans-comp-condition}). Hence $(A,\cdot,[\;,\;])$ is a transposed Poisson algebra. An overview of recent results in transposed Poisson algebras can be found in \cite{Kaygorodov}.

The notion of a transposed Poisson algebra can be extended to superalgebras if in the definition of a transposed Poisson algebra we assume that $({\cal P},\cdot)$ is a commutative associative superalgebra, and $({\cal P},\{\;,\;\})$ is a Lie superalgebra \cite{Musson}. Using the Koszul sign rule, we write the compatibility condition (\ref{intr-trans-comp-condition}) in the graded form
\begin{equation}
2\,z\cdot\{x,y\}=\{z\cdot x,y\}+(-1)^{|x||z|}\{x,z\cdot y\},
\label{intr-graded-comp-condition}
\end{equation}
where $|x|,|z|$ are the parities of elements of a superalgebra. The notion of a transposed Poisson superalgebra was introduced in \cite{Abramov 1} and later the same definition was given in \cite{Ouaridi}, where the author proved that Kantor double of a transposed Poisson algebra is a Jordan superalgebra. In \cite{Abramov 1} it was shown that one can construct an important example of a transposed Poisson superalgebra using an even derivation of a commutative superalgebra in a similar way to how this is done in the case of a transposed Poisson algebra (\ref{intr-D-bracket}). In more detail, if we have a commutative superalgebra $(A,\cdot)$ and its even derivation $D$, then we define the bracket by the formula 
\begin{equation}
[x,y]=x\cdot D(y)-(-1)^{|x||y|}\,y\cdot D(x),
\label{intr-graded-D-bracket}
\end{equation}
where $x,y$ are elements of a commutative superalgebra, $|x|,|y|$ are their parities and $D:A\to A$ is an even derivation. Then $(A,[\;,\;])$ is a Lie superalgebra called a virasorisation of $A$ \cite{Roger}. Now it can be proved that $(A,\cdot,[\;,\;])$ is a transposed Poisson superalgebra \cite{Abramov 1}. It is worth to note that it is not possible to use bracket (\ref{intr-graded-D-bracket}) in the case when $D$ is an odd derivation, because in this case the consistency of parities is broken, that is, the parity of the right-hand side of the equation (\ref{intr-graded-D-bracket}) is $|x|+|y|+1$, but it should be $|x|+|y|$. In this paper we investigate the question of how an odd derivation of a commutative superalgebra can be used to construct (analogously to (\ref{intr-graded-D-bracket})) a structure similar to a transposed Poisson superalgebra. To this end, we consider a left $A$-supermodule ${\cal E}={\cal E}_0\oplus{\cal E}_1$, where $A=A_0\oplus A_1$ is a commutative superalgebra. We assume that a left $A$-supermodule $\cal E$ is endowed with a bilinear bracket $(X,Y)\in{\cal E}\times{\cal E}\to\{X,Y\}\in{\cal E}$ which is commutative $\{X,Y\}=(-1)^{|X||Y|}\{Y,X\}$ ($|X|,|Y|$ are the parities) and satisfies $|\{X,Y\}|=|X|+|Y|$. Next we assume that $({\cal E}_0,\{\;,\;\})$ is a Jordan algebra \cite{McCrimmon} and the mappings ${\cal E}_0\times{\cal E}_1\to {\cal E}_1,{\cal E}_1\times{\cal E}_0\to {\cal E}_1$ defined by $(X,Y)\to \{X,Y\}$ and $\{Y,X\}\to \{Y,X\}$, where $X\in{\cal E}_0,Y\in{\cal E}_1$, define on ${\cal E}_1$ the structure of a Jordan module over ${\cal E}_0$ \cite{Carotenuto} . Then we propose to call a left $A$-supermodule structure of $\cal E$ transposed Poisson type compatible (TP-compatible in short) with a Jordan structure of $\cal E$ if they satisfy the super transposed Poisson compatibility condition (\ref{intr-graded-comp-condition}) written in this case in the form
\begin{equation}
2\,z\cdot\{X,Y\}=\{z\cdot X,Y\}+(-1)^{|z||X|}\{X,z\cdot Y\},\;\;z\in A,\;\; X,Y\in\cal E,
\label{intr-graded-comp-condition 2}
\end{equation}
where dot stands for multiplication of elements of left $A$-supermodule $\cal E$ by elements of a superalgebra $A$.

As an example of TP-compatible structure, we consider a commutative superalgebra $A$, its odd derivation $\delta$ and the super vector space ${\mathfrak D}=\{x\cdot \delta:x\in A\}$, where $(x\cdot\delta)(y)=x\,\delta(y)$ and $|x\cdot\delta|=|x|+1$. Obviously ${\mathfrak D}={\mathfrak D}_0\oplus{\mathfrak D}_1$ is a left $A$-supermodule, where ${\mathfrak D}_0\cong A_1, {\mathfrak D}_1\cong A_0$. We endow $\mathfrak D$ with the bracket \cite{Roger}
\begin{equation}
\{X,Y\}=(x\,\delta(y)+(-1)^{(|x|+1)(|y|+1)}y\,\delta(x))\cdot\delta,
\label{intr-graded-delta-bracket}
\end{equation}
where $X=x\cdot\delta, Y=y\cdot\delta$. We prove that $({\mathfrak D}_0,\{\;,\;\})$ is a Jordan algebra and ${\mathfrak D}_1$ is a Jordan module over ${\mathfrak D}_1$. Then we prove that the left $A$-supermodule structure of $\mathfrak D$ is TP-compatible with the Jordan structure of $\mathfrak D$ induced by the bracket (\ref{intr-graded-delta-bracket}), that is, we prove the TP-compatibility condition (\ref{intr-graded-comp-condition 2}).

The second question we consider in this paper is the the question of how one can construct a 3-Lie superalgebra given a transposed Poisson superalgebra and its even derivation. In paper \cite{Bai-Guo-Wu} the authors showed a very important connection between transposed Poisson algebras and 3-Lie algebras. In particular, it was shown that given a transposed Poisson algebra and its derivation, one can construct a ternary bracket that satisfies all the requirements of a 3-Lie algebra. The concept of $n$-Lie algebra was introduced by Filippov in \cite{Filippov}. Later, a surge of interest in these algebras \cite{Abramov 1, Abramov 2, RBai, Rotkiewicz} and their various generalizations was due to their applications in generalized Hamiltonian mechanics \cite{Nambu,Awata} and M-brain theory \cite{Lambert,Palmer} . An $n$-Lie superalgebra is a $\mathbb Z_2$-extension of the concept of $n$-Lie algebra to the super case \cite{Abramov 2, Abramov 3}. Our approach to the question mentioned above is based on the construction proposed in \cite{Bai-Guo-Wu}, where it was shown how one can construct a 3-Lie algebra given a transposed Poisson algebra and its derivation. To be more exact, we prove that if $\cal P$ is a transposed Poisson superalgebra and $D$ is an even derivation of this algebra then the ternary bracket
$$
[x,y,z]=D(x)\cdot [y,z]+(-1)^{|x|(|y|+|z|)}D(y)\cdot[z,x]+(-1)^{(|x|+|y|)|z|}D(z)\cdot[x,y],
$$
satisfies the super Filippov-Jacobi identity or, by other words, $({\cal P},[\;,\;,\;])$ is a 3-Lie superalgebra. This theorem was proved in \cite{Sovetnikov} and we present this proof in Section 4 with some minor modifications.
\section{Preliminaries}
In this paper we investigate the structure of a transposed Poisson superalgebra and how it can be used to construct 3-Lie superalgebras. Throughout what follows, field $\mathbb K$ means either the field of real numbers $\mathbb R$ or the field of complex numbers $\mathbb C$. A Poisson algebra $(P,\cdot,[\;,\;])$ is a $\mathbb K$-vector space $P$ with two binary operations, where $( P,\cdot)$ is an associative commutative algebra and $(P,[\;,\;])$ is a Lie algebra. In addition, these two structures must be compatible, that is, the following condition must be satisfied
\begin{equation}\label{comp-Poisson}
    [x,y\cdot z] = [x,y]\cdot z + y\cdot [x,z] \qquad \forall x,y,z\in P.
\end{equation}
Thus, the compatibility condition (\ref{comp-Poisson}) shows that each element of $P$ defines, by means of a Lie bracket, a derivation of an associative commutative algebra $P$. The well known example of a Poisson algebra is the associative, commutative algebra of smooth functions on a phase space, equipped with a Poisson bracket. A derivation of a Poisson algebra is a linear mapping $D:P\to P$ which is a derivation of an associative commutative algebra $(P,\cdot)$ as well as a derivation of a Lie algebra $(P,[\;,\;])$. Thus, in the case of derivation of a Poisson algebra we have
\begin{equation}\label{Leibniz-for-dot}
    D(x\cdot y) = D(x)\cdot y + x\cdot D(y),\;\;\; D([x,y]) = [D(x),y]+[x,D(y)].
\end{equation}
The notion of transposed Poisson algebra was first proposed and studied in \cite{Bai-Guo-Wu}. This structure turned out to be very interesting from a point of view of algebra and geometry as well as form a point of view of applications in mathematical physics. A transposed Poisson algebra $({\cal P},\cdot,[\,,\,])$, like a Poisson algebra, is a $\mathbb K$-vector space $\cal P$ with two binary operations, where $(\cal P,\cdot)$ is an associative commutative algebra and $(\cal P,[\;,\;])$ is a Lie algebra. The difference between the notion of a Poisson algebra and that of a transposed Poisson algebra is a compatibility condition. In the case of a transposed Poisson algebra $\cal P$, the compatibility condition has the form
\begin{equation}\label{comp-transposed}
    2x\cdot [y,z] = [x\cdot y,z]+[y,x\cdot z] \qquad \forall x,y,z\in \cal P.
\end{equation}
This condition shows that each element of $\cal P$ determines, by means of a multiplication of an associative commutative algebra $({\cal P},\cdot)$, a derivation of a Lie bracket. Hence the use of the term "transposed" in this context is relevant. We also note one more difference between the compatibility condition (\ref{comp-transposed}) and the compatibility condition (\ref{comp-Poisson}). This is the presence of the factor 2 on the left-hand side of (\ref{comp-transposed}). This factor plays an important role and is related to an arity of a Lie bracket. An important example of a transposed Poisson algebra proposed in \cite{Bai-Guo-Wu} is an associative commutative algebra $(A,\cdot)$ equipped with the Lie bracket
\begin{equation}
[x,y]=x\,D(y)-y\,D(x), \;\;\;\;x,y\in A,
\label{bracket with derivation}
\end{equation}
where $D$ is a derivation of $A$.

We can extend the notion of a transposed Poisson algebra to superalgebras assuming that $({\cal P},\cdot)$ is a commutative associative superalgebra and $({\cal P},[\;,\;])$ is a Lie superalgebra. As a compatibility condition for these two structures, it is natural to take a graded version of the compatibility condition (\ref{comp-transposed}) for a transposed Poisson algebra. The notion of a transposed Poisson superalgebra was introduced in \cite{Abramov 1}. Let ${\cal P}={\cal P}_0\oplus {\mathcal P}_1$ be a super $\mathbb K$-vector space. The parity of a homogeneous element $x\in{\cal P}$ will be denoted by $|x|$, that is, $|x|\in{\mathbb Z}_2$. 
\begin{defn}
     A transposed Poisson superalgebra is a triple $(\cal P, \cdot, [\;,\;])$, where $(\cal P,\cdot)$ is a commutative associative superalgebra, $(\cal P,[\,,\,])$ is a Lie superalgerba and for any three elements $x,y,z\in {\cal P}$ it holds 
    \begin{equation}\label{comp-super-transposed}
        2z\cdot[x,y] = [z\cdot x, y] + (-1)^{|x||z|}[x,z\cdot y]. 
    \end{equation}
\end{defn}
\noindent
In \cite{Abramov 1} it was shown that a transposed Poisson superalgebra can be constructed in a similar way to a transposed Poisson algebra, which is constructed on an associative commutative algebra by means of a derivation and the Lie bracket (\ref{bracket with derivation}). But in the case of a superalgebra one can attribute a parity to a derivation. We remind that a derivation $D$ of a superalgebra $A$ is called even if it does not change the parity of a homogeneous element, that is,  $|D(x)|=|x|$ for any $x\in A_0\cup A_1$, and a derivation $D$ of a superalgebra $A$ is called odd if it changes the parity of a homogeneous element, that is, $|D(x)| = |x|+1$ for arbitrary $x\in A_0\cup A_1$. The parity of a derivation $D$ will be denoted by $|D|$. Hence, $|D|=0$ if $D$ is an even derivation and $|D|=1$ if $D$ is odd. The case of even derivation was considered in \cite{Abramov 1} and it was shown that if $(A,\cdot)$ is a commutative associative superalgebra, $D:A\to A$ is an even derivation and the bracket is defined by
\begin{equation}
[x,y]=x\cdot D(y)-(-1)^{|x||y|}\;y\cdot D(x),\;\;x,y\in A,
\label{graded bracket with even derivation}
\end{equation}
then $(A,\cdot,[\;,\;])$ is a transposed Poisson superalgebra. The case of an odd derivation will be considered in the present paper in the next section.

One of the goals of this paper is to construct a 3-Lie superalgebra given a transposed Poisson superalgebra and its even derivation. For this we need a graded form of the identities proved in \cite{Bai-Guo-Wu}. The graded form of these identities is given in the following theorem proved in \cite{Abramov 1}. 
\begin{theorem}
Let $(\cal P,\cdot,[\;,\;])$ be a transposed Poisson superalgebra. Then for any $h,x,y,z,u,v\in \cal P$ we have the following identities
\begin{eqnarray}
(-1)^{|x|\,|z|}\,x\cdot[y,z]+(-1)^{|x|\,|y|}\,y\cdot[z,x]+(-1)^{|y|\,|z|}\,z\cdot[x,y]\!\!\! &=&\!\!\! 0,\label{identity 1}\\
(-1)^{|x|\,|z|}\,[h\cdot[x,y],z]+(-1)^{|x|\,|y|}\,[h\cdot[y,z],x]+(-1)^{|y|\,|z|}\,[h\cdot[z,x],y]\!\!\! &=&\!\!\! 0,\label{identity 2}\\
(-1)^{|x|\,|z|}\,[h\cdot x,[y,z]]+(-1)^{|x|\,|y|}\,[h\cdot y,[z,x]]+(-1)^{|y|\,|z|}\,[h\cdot z,[x,y]]\!\!\! &=&\!\!\! 0,\label{identity 3}\\
(-1)^{|x|\,|z|}\,[h,x]\,[y,z]+(-1)^{|x|\,|y|}\,[h,y]\,[z,x]+(-1)^{|y|\,|z|}\,[h,z]\,[x,y]\!\!\! &=& 0, 
\label{identity 4}\\
2\,u\cdot v\cdot[x,y]=(-1)^{|x|\,|v|}[u\cdot x,v\cdot y]+(-1)^{|u|\,(|x|+|v|)}[v\cdot x,u\cdot y],
\label{identity 5}\\
(-1)^{|u|\,|yv|}\,x\cdot[u,y\cdot v]+(-1)^{|v|\,|xy|}\,v\cdot[x\cdot y,u]+(-1)^{|x|\,|yv|}y\cdot[v,x]\cdot u\!\!\! &=&\!\!\! 0.
\label{identity 6}
\end{eqnarray}
\label{Theorem 2}
\end{theorem}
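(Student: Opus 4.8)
Here is a plan for proving Theorem~\ref{Theorem 2}.

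The plan is to prove identity (\ref{identity 1}) first, using only the compatibility condition (\ref{comp-super-transposed}), and then to bootstrap the remaining five identities from it, together with graded antisymmetry $[a,b]=-(-1)^{|a||b|}[b,a]$ and the (Leibniz and cyclic forms of the) super Jacobi identity of the bracket, plus repeated use of (\ref{comp-super-transposed}). For (\ref{identity 1}) I would multiply the left-hand side by $2$ and expand each summand $2\,x\cdot[y,z]$, $2\,y\cdot[z,x]$, $2\,z\cdot[x,y]$ via (\ref{comp-super-transposed}) with the arguments cyclically permuted; this produces six bracket terms of the shapes $[x\cdot y,z]$, $[y,x\cdot z]$ and their cyclic images. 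Rewriting the products by graded commutativity (e.g.\ $y\cdot x=(-1)^{|x||y|}x\cdot y$) and flipping brackets by graded antisymmetry makes these six terms cancel in three pairs. Hence twice the left-hand side is zero, and since $\mathbb{K}$ has characteristic $0$ we obtain (\ref{identity 1}).

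For (\ref{identity 2}) I would use (\ref{comp-super-transposed}) in the form $[h\cdot[x,y],z]=2\,h\cdot[[x,y],z]-(-1)^{|h|(|x|+|y|)}[[x,y],h\cdot z]$ and sum the three cyclic terms with the weights of the statement. The first group contributes $2\,h$ times the cyclic sum of $(-1)^{|x||z|}[[x,y],z]$, which vanishes by super Jacobi; to the surviving group I would apply graded antisymmetry, then (\ref{comp-super-transposed}) once more to pull $h$ back out in front of a double bracket, and use super Jacobi a second time. After bookkeeping the Koszul signs, the cyclic sum in (\ref{identity 2}) turns out to equal minus itself, hence is $0$. Identity (\ref{identity 3}) is then deducible from (\ref{identity 2}) together with (\ref{identity 1}): resolve $[h\cdot x,[y,z]]$ by (\ref{comp-super-transposed}) as above, discard the $h$-term by super Jacobi, flip the resulting $[x,h\cdot[y,z]]$ by antisymmetry, relabel, and invoke (\ref{identity 2}); so it requires no separate argument, though it also admits a direct proof by the same scheme.

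For (\ref{identity 4}) I would apply (\ref{identity 1}) to the triple $(h,x,[y,z])$ and move $[y,z]$ past $[h,x]$ by graded commutativity, obtaining $[h,x]\cdot[y,z]=-\,h\cdot[x,[y,z]]+(-1)^{|h||x|}x\cdot[h,[y,z]]$. Summing cyclically with the weights of the statement, the term in which $h$ multiplies the cyclic sum of the $[x,[y,z]]$ dies by super Jacobi; expanding $[h,[y,z]]$ by the Leibniz form of super Jacobi and re-applying (\ref{identity 1}) to the resulting terms $x\cdot[y,[h,z]]$, one finds that the double-bracket contributions reassemble, after a cyclic relabelling, into the negative of the left-hand side of (\ref{identity 4}), so that it equals its own negative and is therefore $0$. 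Identity (\ref{identity 5}) uses nothing but (\ref{comp-super-transposed}): peel off the left multiplication by $u$ from $[u\cdot x,v\cdot y]$, then that by $v$ from $[x,v\cdot y]$, resolve the resulting $u\cdot[v\cdot x,y]$ and $[x,u\cdot v\cdot y]$ by (\ref{comp-super-transposed}) again (viewing $u\cdot v$ as a single element), and collect; the unwanted $[u\cdot v\cdot x,y]$ and $[x,u\cdot v\cdot y]$ terms cancel and (\ref{identity 5}) is what remains. Identity (\ref{identity 6}) is of the same type: apply (\ref{comp-super-transposed}) to $[u,y\cdot v]$ and to $[x\cdot y,u]$, and (\ref{identity 1}) to the triple $(y,v,x)$ producing the factor $y\cdot[v,x]$, and combine.

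The main obstacle is not any single step but the orchestration in (\ref{identity 2})--(\ref{identity 4}): one must choose the order of substitutions so that the weighted cyclic sum folds back onto minus itself while keeping every Koszul sign under control, which is exactly where slips occur; the underlying algebra is otherwise routine. Identity (\ref{identity 6}) carries the heaviest sign bookkeeping but is the least conceptual once (\ref{identity 1}) and (\ref{comp-super-transposed}) are in hand.
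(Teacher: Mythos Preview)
The paper does not prove this theorem at all: it merely states the identities and attributes the proof to \cite{Abramov 1} (see the sentence immediately preceding the theorem). So there is no proof in the paper to compare your plan against; your outline is being measured only against correctness.

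Your overall architecture---prove (\ref{identity 1}) first from (\ref{comp-super-transposed}) alone, then bootstrap (\ref{identity 2})--(\ref{identity 6}) using (\ref{identity 1}), graded antisymmetry, super Jacobi, and further applications of (\ref{comp-super-transposed})---is the standard one, and your treatments of (\ref{identity 1}), (\ref{identity 4}), (\ref{identity 5}) are on the right track.

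There is, however, a genuine gap in your plan for (\ref{identity 2}). The route you describe is circular. Carrying it out: writing $[h\cmult[x,y],z]=2h\cmult[[x,y],z]-(-1)^{|h|(|x|+|y|)}[[x,y],h\cmult z]$, summing cyclically and killing the first group by super Jacobi leaves the cyclic sum of $-(-1)^{|x||z|+|h|(|x|+|y|)}[[x,y],h\cmult z]$. Applying antisymmetry gives the cyclic sum of $(-1)^{|y||z|}[h\cmult z,[x,y]]$, which is exactly the left-hand side of (\ref{identity 3}). Applying (\ref{comp-super-transposed}) again with $h$ as the multiplier and killing another Jacobi sum returns you, after one more antisymmetry, to the cyclic sum of $(-1)^{|x||z|}[h\cmult[x,y],z]$, i.e.\ to the left-hand side of (\ref{identity 2}) itself. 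You obtain $S=S$, not $S=-S$. The two compatibility moves you perform are inverses of each other once Jacobi has removed the ``$2h\cdot$'' pieces, so no information is gained.

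To break the circle you must introduce an asymmetry at the second step: for instance, after reaching the $(-1)^{|y||z|}[h\cmult z,[x,y]]$ form, pull out $z$ rather than $h$ via (\ref{comp-super-transposed}) (using $h\cmult z=(-1)^{|h||z|}z\cmult h$), or expand $[[x,y],h\cmult z]$ by the super Jacobi identity in the variables $x,y,h\cmult z$ and then apply (\ref{comp-super-transposed}) to the resulting single brackets $[y,h\cmult z]$, $[h\cmult z,x]$. Either way, (\ref{identity 2}) and (\ref{identity 3}) come out together (your observation that they are equivalent is correct), but a non-trivial mixing step is needed; the plan as written does not supply one.
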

Another notion that will play an important role in this paper is a notion of 3-Lie superalgebra. 3-Lie superalgebra is an extension of a notion of 3-Lie algebra to the case of superalgebras. A notion of 3-Lie algebra is a particular case of a notion of $n$-Lie algebra ($n\geq 2$) proposed and developed by Filippov \cite{Filippov}. Let $L$ be a $\mathbb K$-vector space. Then $(L,[\;,\;,\;])$ is said to be a 3-Lie algebra if ternary bracket $[\;,\;,\;]:L\times L\times L\to L$ is totally skew-symmetric
\begin{equation*}
    [x,y,z] = -[y,x,z],\;\;[x,y,z] = -[x,z,y],\;\;\;x,y,z\in L,
\end{equation*}
and any five elements $x,y,z,u,v\in L$ satisfy the Filippov-Jacobi identity
\begin{equation*}
    [[x,y,z],u,v] = [[x,u,v],y,z] + [[y,u,v],z,x] + [[z,u,v],x,y].
\end{equation*}
\begin{defn}
    Let ${\cal L}={\cal L}_0\oplus {\cal L}_1$ be a super $\mathbb K$-vector space. $(\cal L,[\,,\,,\,])$ is said to be a 3-Lie superalgebra, if a $\mathbb K$-trilinear ternary bracket $[\,,\,,\,]:{\cal L}^3\to \cal L$ has the following properties
    \begin{equation}\label{parity-compatibility}
        |[x,y,z]| = |x|+|y|+|z|,
    \end{equation}
    \begin{equation}\label{ternary-skew-symmetry}
        [y,x,z] = -(-1)^{|x||y|}[x,y,z],\quad [x,z,y] = -(-1)^{|y||z|}[x,y,z],
    \end{equation}
and satisfies the super Filippov-Jacobi identity
    \begin{equation}\label{Filippov-Jacobi}
        \begin{split}
        [[x,y,z],u,v] = (-1)^{|yz,uv|}[[x,u,v],y,z]+ (-1)^{|x,yz|+|xz,uv|}[[y,u,v],z,x] +(-1)^{|xy,zuv|}[[z,u,v],x,y],
        \end{split}
    \end{equation}
where $|yz,uv|=(|y|+|z|)(|u|+|v|),\;|x,yz|=|x|(|y|+|z|),\; |xy,zuv|=(|x|+|y|)(|z|+|u|+|v|)$.
\end{defn}
\section{The case of odd derivation}
A method for constructing a transposed Poisson algebra if we are given a commutative associative algebra and its derivation was proposed in \cite{Bai-Guo-Wu}. In \cite{Abramov 1} it was proved that this method can also be applied to construct a transposed Poisson superalgebra, that is, if we are given a commutative associative superalgebra and its even derivation, then we can construct a transposed Poisson superalgebra by equipping commutative superalgebra with the graded Lie bracket (\ref{graded bracket with even derivation}). In this section we consider the question of how this method can be extended to odd derivations of commutative superalgebra. This question is important because odd derivations play a significant role not only in the structure of a commutative superalgebra but also in applications. In this case, odd derivations whose square is zero play a particularly important role. For example, the exterior differential in a graded differential  algebra of differential forms and the BRST-operator in quantum field theory \cite{Slavnov-Faddeev}.

Thus, our goal in this section is to include odd derivations of a commutative superalgebra in the scheme of constructing algebras like a transposed Poisson superalgebra. For this purpose we introduce the following notion.
\begin{defn}
Let ${\cal E}={\cal E}_0\oplus{\cal E}_1$ be a left supermodule over a commutative superalgebra $A=A_0\oplus A_1$, that is, $(x,X)\in A\times{\cal E}\mapsto x\cdot X\in {\cal E}$, where $|x\cdot X|=|x|+|X|$ and $|x|,|X|$ are degrees (mod 2). Let $\{\;,\;\}:{\cal E}\times {\cal E}\to {\cal E}$ be a $\mathbb K$-bilinear mapping such that
\begin{itemize}
\item[a)] $|\{X,Y\}|=|X|+|Y|$,
\item[b)] $\{X,Y\}=(-1)^{|X||Y|}\{Y,X\}$,
\item[c)] $({\cal E}_0,\{\;,\;\})$ is a Jordan algebra,
\item[d)] the mappings $\{\;,\;\}:{\cal E}_0\times{\cal E}_1\to{\cal E}_1$ and $\{\;,\;\}:{\cal E}_1\times{\cal E}_0\to{\cal E}_1$ define the Jordan ${\cal E}_1$ over ${\cal E}_0$.
\end{itemize}
Then we say that a left $A$-supermodule $\cal E$ is transposed Poisson type compatible with a Jordan structure of $\cal E$ if for any $z\in A, X,Y\in{\cal E}$ we have the identity
$$
2\,z\cdot\{X,Y\}=\{z\cdot X,Y\}+(-1)^{|z|\,|X|}\{X,z\cdot Y\}.
$$
\end{defn}
\noindent

From this definition it follows that if $X_0\in{\cal E}_0, X_1\in{\cal E}_1$ then $\{X_0,X_1\}=\{X_1,X_0\}$. This shows that the elements of ${\cal E}_1$ obtained from $X_1$ by the left and right actions of $X_0\in{\cal E}_0$ are equal as it should be in the case of a Jordan module. Thus the only condition which should be checked is that ${\cal E}={\cal E}_0\oplus{\cal E}_1$ is a Jordan algebra if we equip it with the multiplication
\begin{equation}
(X_0,X_1).(Y_0,Y_1)=(\{X_0,Y_0\},\{X_0,Y_1\}+\{X_1,Y_0\})
\end{equation}
In what follows we assume that $A$ is a commutative superalgebra and multiplication in this algebra will be denoted by means of a juxtaposition, that is,  writing one element after another. Then the vector space of derivations of this algebra is a Lie superalgebra if we equip it with the bracket
\begin{equation}
[D_1,D_2]=D_1\circ D_2-(-1)^{|D_1||D_2|}\,D_2\circ D_1,
\label{initial graded commutator}
\end{equation}
where $D_1,D_2$ are derivations of $A$ and $|D_1|,|D_2|$ are their parities. Now assume that $D$ is a derivation of a superalgebra $A$ and $D$ has a certain parity, that is, $D$ is either even or odd derivation. We consider the super vector space ${\mathfrak D}^D=\{x\cdot D:x\in A\}$ generated by $D$. Here $x\cdot D$ is a derivation of a superalgebra $A$ defined by $(x\cdot D)(y)=x D(y)$ and $|x\cdot D|=|x|+|D|$. Using the terminology of differential geometry, we will call derivations $x\cdot D$ vector fields. Obviously ${\mathfrak D}^D$ becomes a left $A$-supermodule if one defines $x\cdot(y\cdot D)=(xy)\cdot D$. Let $X=x\cdot D, Y=y\cdot D$ and calculating the bracket (\ref{initial graded commutator}) we find
\begin{eqnarray}
[X,Y] \!\!&=&\!\!\big(x\,D(y)-(-1)^{(|x|+|D|)(|y|+|D|)}y\,D(x)\big)\cdot D\nonumber\\
      &&\;\;\;\;\;\;\;+\big((-1)^{|y||D|}-(-1)^{|y||D|+|D|^2}\big)\, (xy)\cdot D^2.\label{graded commutator}
\end{eqnarray}
Assume $D$ is an even derivation of $A$, that is, $|D|=0$. In this case the second term (containing $D^2$) at the right-hand side of the above formula vanishes and we get
\begin{equation}
[X,Y] =\big(x\,D(y)-(-1)^{|x||y|}y\, D(x)\big)\cdot D.
\label{bracket with even derivation}
\end{equation}
Thus, the super vector space ${\mathfrak D}^D$ generated by an even degree derivation $D$ closes with respect to the graded commutator (\ref{graded commutator}) and we obtain the structure of the Lie superalgebra on ${\mathfrak D}^D$. It is easy to see that we can omit $D$ in both sides of (\ref{bracket with even derivation}) and the consistency of parities will not be broken. Figuratively speaking, we can descend the bracket (\ref{bracket with even derivation}) from vector fields ${\mathfrak D}^D$ to elements of a superalgebra $A$. In this case we get
\begin{equation}
[x,y] =x\,D(y)-(-1)^{|x||y|}y\,D(x).
\label{even derivation bracket on elements}
\end{equation}
Then it can be proved \cite{Abramov 1} that the bracket (\ref{even derivation bracket on elements}) satisfies the compatibility condition
$$
2\,z\,[x,y]=[z x,y]+(-1)^{|z|\,|x|}[x,z y].
$$
Hence a commutative superalgebra $A$ endowed with the bracket (\ref{even derivation bracket on elements}) is a transposed Poisson superalgebra.

Now we consider the case of odd derivation of a commutative superalgebra $A$. This odd derivation will be denoted by $\delta$, that is, $|\delta|=1$. Let us consider the super vector space ${\mathfrak D}^\delta=\{x\cdot \delta: x\in A\}$ induced by an odd derivation $\delta$. In the case of an odd derivation we have $|y||\delta|=|y|,\;|y||\delta|+|\delta|^2=|y|+1$ and the second term on the right side of (\ref{graded commutator}), containing $\delta^2$, does not vanish. However, if in the right side of formula (\ref{graded commutator}) we take plus instead of minus, that is, we consider the bracket
\begin{eqnarray}
\{X,Y\} \!\!&=&\!\!\big(x\,\delta(y)+(-1)^{(|x|+|\delta|)(|y|+|\delta|)}y\,\delta(x)\big)\cdot \delta\nonumber\\
      &&\;\;\;\;\;\;\;+\big((-1)^{|y|}+(-1)^{|y|+1}\big)\, (xy)\cdot \delta^2.\label{}
\end{eqnarray}
then the term containing $\delta^2$ vanishes. For $X=x\cdot\delta$ and $Y=y\cdot\delta$ we will have
\begin{equation}
\{X,Y\} =\big(x\,\delta(y)+(-1)^{(|x|+1)(|y|+1)}y\, \delta(x)\big)\cdot \delta.
\label{graded commutator odd derivation}
\end{equation}
Comparing with the case of even derivation we see that in the case of odd derivation $\delta$ we can not omit $\delta$ in vector fields in (\ref{graded commutator odd derivation}) because we will break the consistency of parities, that is, we will get $|\{x,y\}|=|x|+|y|+1$ instead of what it should be 
$|\{x,y\}|=|x|+|y|$. Hence we can not, so to speak, descend the bracket (\ref{graded commutator odd derivation}) from the vector fields ${\mathfrak D}^\delta$ on to a superalgebra $A$.

Obviously ${\mathfrak D}^\delta$ becomes a super vector space if one defines the parity of a vector field $x\cdot\delta$ as $|x|+1$. The super vector space ${\mathfrak D}^\delta$  is a left $A$-supermodule if we define $y\cdot (x\cdot \delta)=(y x)\cdot\delta$. 
\begin{theorem}
$({\mathfrak D}^\delta_0,\{\;,\;\})$ is a Jordan algebra. If $\delta^2=0$ then the bracket (\ref{graded commutator odd derivation}) defines on ${\mathfrak D}^\delta_1$ the structure of a Jordan module over ${\mathfrak D}^\delta_0$ . The left $A$-supermodule ${\mathfrak D}^\delta$ is transposed Poisson type compatible with the Jordan structure of ${\mathfrak D}^\delta$, that is, we have the identity
\begin{equation}
2 z\cdot \{X,Y\}=\{z\cdot X,Y\}+(-1)^{|z||X|}\{X,z\cdot Y\},
\label{compatibility odd derivation}
\end{equation}
\end{theorem}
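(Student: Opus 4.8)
The plan is to work with $\mathfrak{D}^{\delta}$ as a space of operators inside the associative algebra $\mathrm{End}(A)$ and to deduce each assertion from the standard way of obtaining Jordan algebras from associative ones. Write $L_{a}\in\mathrm{End}(A)$ for left multiplication by $a\in A$, so that the vector field $x\cdot\delta$ is the operator $L_{x}\circ\delta$ and $\mathfrak{D}^{\delta}=\{L_{a}\circ\delta:a\in A\}$, the parity of $L_{a}\circ\delta$ being the operator parity $|a|+1$. The Leibniz rule $\delta(ab)=\delta(a)b+(-1)^{|a|}a\,\delta(b)$ gives at once the composition formula
\[
(L_{a}\circ\delta)\circ(L_{b}\circ\delta)=L_{a\,\delta(b)}\circ\delta+(-1)^{|b|}L_{ab}\circ\delta^{2},\qquad a,b\in A .
\]

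For the first assertion I would take $x,y\in A_{1}$, so that $X=x\cdot\delta,\ Y=y\cdot\delta\in\mathfrak{D}^{\delta}_{0}$. Since $A$ is commutative, $xy+yx=0$, hence $L_{xy}+L_{yx}=0$ and the $\delta^{2}$–terms cancel in the anticommutator: $(L_{x}\circ\delta)\circ(L_{y}\circ\delta)+(L_{y}\circ\delta)\circ(L_{x}\circ\delta)=L_{x\,\delta(y)+y\,\delta(x)}\circ\delta$. By (\ref{graded commutator odd derivation}), in which the sign $(-1)^{(|x|+1)(|y|+1)}$ equals $1$ because $|x|=|y|=1$, this is exactly the operator attached to $\{X,Y\}$. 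Thus $\mathfrak{D}^{\delta}_{0}$ is a linear subspace of $\mathrm{End}(A)$ closed under the anticommutator, i.e.\ a special Jordan algebra, and therefore satisfies the Jordan identity. (No use of $\delta^{2}=0$ is needed here; in fact the $\delta^{2}$–terms cancel for all $a,b$ in the super-anticommutator, so $(\mathfrak{D}^{\delta},\{\,,\,\})$ is a special Jordan superalgebra and $\mathfrak{D}^{\delta}_{0}$ is its even part.)

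For the second assertion I would use $\delta^{2}=0$: the composition formula becomes $(L_{a}\circ\delta)\circ(L_{b}\circ\delta)=L_{a\,\delta(b)}\circ\delta$, so $\mathfrak{D}^{\delta}$ is itself an associative subalgebra of $\mathrm{End}(A)$, $\mathbb{Z}_{2}$–graded by the operator parity, with even part $\mathfrak{D}^{\delta}_{0}$ and odd part $\mathfrak{D}^{\delta}_{1}$. I then invoke the standard fact that for any associative superalgebra $\mathcal{R}=\mathcal{R}_{\bar 0}\oplus\mathcal{R}_{\bar 1}$ the odd part $\mathcal{R}_{\bar 1}$, with the action $a\bullet m=am+ma$ of $\mathcal{R}_{\bar 0}$, is a Jordan module over the Jordan algebra $(\mathcal{R}_{\bar 0},\,ab+ba)$; the cleanest proof adjoins a central even indeterminate $\theta$ with $\theta^{2}=0$ and observes that $\mathcal{R}_{\bar 0}\oplus\mathcal{R}_{\bar 1}\theta$ is an associative subalgebra of $\mathcal{R}\otimes\mathbb{K}[\theta]/(\theta^{2})$ in which $(\mathcal{R}_{\bar 1}\theta)^{2}=0$, whose plus-algebra is precisely the split null extension of $(\mathcal{R}_{\bar 0},ab+ba)$ by $(\mathcal{R}_{\bar 1},\bullet)$ and hence Jordan. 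Checking, exactly as in the previous paragraph (the relevant sign in (\ref{graded commutator odd derivation}) again being $1$), that $ab+ba$ on $\mathfrak{D}^{\delta}_{0}$ and the action $\bullet$ of $\mathfrak{D}^{\delta}_{0}$ on $\mathfrak{D}^{\delta}_{1}$ both coincide with $\{\,,\,\}$ then yields that $\mathfrak{D}^{\delta}_{1}$ is a Jordan module over $\mathfrak{D}^{\delta}_{0}$; equivalently, that $\mathfrak{D}^{\delta}_{0}\oplus\mathfrak{D}^{\delta}_{1}$ with the multiplication $(X_{0},X_{1}).(Y_{0},Y_{1})=(\{X_{0},Y_{0}\},\{X_{0},Y_{1}\}+\{X_{1},Y_{0}\})$ is a Jordan algebra.

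It remains to verify the compatibility identity (\ref{compatibility odd derivation}), and this is a direct computation: writing $X=x\cdot\delta,\ Y=y\cdot\delta$ and $z\cdot X=(zx)\cdot\delta,\ z\cdot Y=(zy)\cdot\delta$, I substitute into (\ref{graded commutator odd derivation}), expand $\delta(zx),\ \delta(zy)$ by the Leibniz rule, and reorder the monomials of $A$ using supercommutativity; the terms carrying $\delta(z)$ cancel in pairs and the remainder is $2z\cdot\{X,Y\}$ with the prescribed Koszul sign (only the Leibniz rule enters, not $\delta^{2}=0$). The hard part will be the Jordan–module step: because $\{\mathfrak{D}^{\delta}_{1},\mathfrak{D}^{\delta}_{1}\}$ is in general nonzero inside $\mathrm{End}(A)$, one cannot realise the split null extension as a subspace of $\mathrm{End}(A)^{(+)}$ directly, so the square-zero extension device (or, alternatively, a head-on check of the linearised Jordan identities containing exactly one factor from $\mathfrak{D}^{\delta}_{1}$) is essential; it is also the only place where the hypothesis $\delta^{2}=0$ is genuinely used. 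The Koszul sign bookkeeping in the last computation is the other point where care is needed.
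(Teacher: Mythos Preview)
Your proof is correct, but it takes a genuinely different route from the paper for the Jordan--algebra and Jordan--module parts. The paper proves both by brute force: it expands $\{\{X,X\},\{Y,X\}\}$ and $\{\{\{X,X\},Y\},X\}$ term by term (formulae (\ref{calculations 1})--(\ref{calculations 2})), then uses $x^{2}=0$ for odd $x$ to kill the discrepancy; for the module it writes out both sides of the linearised identity (\ref{Jordan module identity}) under $\delta^{2}=0$ and matches the four surviving monomials. You instead observe that on $\mathfrak{D}^{\delta}_{0}$ the bracket \emph{is} the ordinary anticommutator in $\mathrm{End}(A)$ (the $\delta^{2}$--terms cancel because $xy+yx=0$ for odd $x,y$), so $\mathfrak{D}^{\delta}_{0}$ is special Jordan with no computation at all; and under $\delta^{2}=0$ the whole $\mathfrak{D}^{\delta}$ becomes an associative $\mathbb{Z}_{2}$--graded subalgebra of $\mathrm{End}(A)$, whence the Jordan--module structure on $\mathfrak{D}^{\delta}_{1}$ follows from the standard split--null--extension argument you sketch. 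This is more conceptual and explains \emph{why} the identities hold, whereas the paper's computation makes the role of $x^{2}=0$ and $\delta^{2}=0$ visible at the level of individual monomials. For the TP--compatibility (\ref{compatibility odd derivation}) both you and the paper do the same direct expansion.

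One caution: your parenthetical remark that $(\mathfrak{D}^{\delta},\{\,,\,\})$ is a special Jordan \emph{super}algebra even without $\delta^{2}=0$ is in tension with the paper's Conclusions, which assert that the super Jordan identity fails in that generality. Your cancellation of the $\delta^{2}$--terms in the super--anticommutator is correct, and the plus--construction $R\mapsto R^{+}$ from associative superalgebras to Jordan superalgebras is standard, so the discrepancy most likely reflects a different choice of ``super Jordan identity'' on the paper's side; but since this remark is not needed for the theorem, you may simply drop it rather than adjudicate the point.
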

\begin{proof}
First of all, we prove the transposed Poisson type compatibility condition (\ref{compatibility odd derivation}). Applying (\ref{graded commutator odd derivation}) to the left side of (\ref{compatibility odd derivation}) we find
\begin{equation}
2 z\cdot \{X,Y\}=2\,\big(z\, x\,\delta(y)+(-1)^{|X||Y|}\,z\,y\,\delta(x)\big)\cdot \delta.
\label{proposition 1,1}
\end{equation}
Calculating the terms at the right side of (\ref{compatibility odd derivation}) we get
\begin{eqnarray}
\{z\cdot X,Y\}\!\!\! &=&\!\!\!\big(z\,x\,\delta(y) +(-1)^{(|z|+|X|) |Y|}y\,\delta(z)\,x+(-1)^{|X||Y|}z\,y\, 
      \delta(x)\big)\cdot\delta,\nonumber\\
\left\{X,z\cdot Y\right\}\!\!\! &=&\!\!\! \big(x\,\delta(z)\,y +(-1)^{|z||X|}z\,x\,\delta(y)+(-1)^{|X|(|z|+|Y|)}z\,y\, 
      \delta(x)\big)\cdot \delta.\nonumber
\end{eqnarray}
Now multiplying the both sides of the second equation by $(-1)^{|z||X|}$ and taking the sum with the first equation we get the right side of (\ref{proposition 1,1}).

A proof that $({\mathfrak D}^\delta_0,\{\;,\;\}$ is a Jordan algebra can be found in \cite{Roger}. Since the proof given in \cite{Roger} is very short and, in our opinion, does not explain some details, we give a more detailed proof here. It is easy to see that the bracket (\ref{graded commutator odd derivation}) has the symmetry
\begin{equation}
\{X,Y\}=(-1)^{|X||Y|}\{Y,X\}.
\label{symmetry of bracket}
\end{equation}
Hence if $X=x\cdot \delta,Y=y\cdot\delta\in{\mathfrak D}^\delta_0$ then $|x|=|y|=1$ and $\{X,Y\}=\{Y,X\}$, i. e. the bracket (\ref{graded commutator odd derivation}) restricted to the subspace ${\mathfrak D}^\delta_0$ is commutative. Hence, in order to prove that $({\mathfrak D}^\delta_0,\{\;,\;\})$ is a Jordan algebra we need to prove the Jordan identity
\begin{equation}
\big\{\{X,X\},\{Y,X\}\big\}=\big\{\{\{X,X\},Y\},X\big\}.
\label{Jordan identity}
\end{equation}
By straightforward calculation we find
\begin{eqnarray}
\big\{\{X,X\},\{Y,X\}\big\} \!\!\!\!&=&\!\!\!\! \big(6\,x\,(\delta(x))^2\,\delta(y)-2\,x^2\,\delta(x)\,\delta^2(y)+2\,y\,(\delta(x))^3-2\,x^2\,\delta^2(x)\,\delta(y)\big)\cdot\delta,\label{calculations 1}\\
\big\{\{\{X,X\},Y\},X\big\} \!\!\!\!&=&\!\!\!\! \big(6\,x\,(\delta(x))^2\,\delta(y)-2\,x^2\,\delta(x)\,\delta^2(y)+2\,y\,(\delta(x))^3-4\,x^2\,\delta^2(x)\,\delta(y)+\nonumber\\
&&\qquad\qquad\qquad\qquad\qquad\qquad\qquad\qquad\qquad\qquad+2\,x^2\,y\,\delta^3(x)\big)\cdot\delta.\label{calculations 2}
\end{eqnarray}
Since a superalgebra $A$ is commutative for any odd element $x\in A_1$ we have $x^2=0$. Therefore, all terms on the right-hand sides of the above equalities containing $x^2$ vanish and we see that the right-hand sides are equal. From this follows the Jordan identity.

Let us now prove that if $\delta^2=0$ then ${\mathfrak D}^\delta_1$ is a Jordan module over the Jordan algebra ${\mathfrak D}^\delta_0$. First of all, we have to show that the left and right actions of ${\mathfrak D}^\delta_0$ on ${\mathfrak D}^\delta_1$ determined by the bracket (\ref{graded commutator odd derivation}) coincide, that is, for any $X\in {\mathfrak D}^\delta_0$ and $Y\in {\mathfrak D}^\delta_1$ it holds $\{X,Y\}=\{Y,X\}$. But this follows immediately from formula (\ref{symmetry of bracket}) when $X\in{\mathfrak D}^\delta_0,Y\in{\mathfrak D}^\delta_1$. Then we must show that (\ref{Jordan identity}) holds when $X\in {\mathfrak D}^\delta_0$ and $Y\in{\mathfrak D}^\delta_1$. But in this case formulae (\ref{calculations 1}), (\ref{calculations 2}) do not change if we assume that $X$ is an even vector field and $Y$ is odd. Indeed, if $X$ is an even vector field and $Y$ is odd, the formula for the bracket (\ref{graded commutator odd derivation}) will have the same form as in the case of even vector fields, that is, $\{X,Y\}=(x\,\delta(y)+y\,\delta(x))\cdot\delta$.

The only identity which we need to prove in order to show that ${\mathfrak D}^\delta_1$ is a Jordan module over the Jordan algebra ${\mathfrak D}^\delta_0$ is the following
\begin{equation}
\Big\{\big\{\{X,X\},Y\big\},Z\Big\}-\Big\{\{X,X\},\{Y,Z\}\Big\}=2\,\Big\{\{X,Y\},\{X,Z\}\Big\}-2\,\Big\{X,\big\{Y,\{X,Z\}\big\}\Big\}.
\label{Jordan module identity}
\end{equation}
Now we assume $\delta^2=0$. We have
\begin{eqnarray}
\{X,X\} &=& 2\,(x\,\delta(x))\cdot\delta,\nonumber\\
\Big\{\big\{\{X,X\},Y\big\},Z\Big\} &=& \big(2\,x\,\delta(x)\,\delta(y)\,\delta(z)+2\,(\delta(x))^2\,y\,\delta(z)+4\,(\delta(x))^2\,\delta(y)\,z\big)\cdot\delta,\nonumber\\
\Big\{\{X,X\},\{Y,Z\}\Big\} &=& \big(4\,x\,\delta(x)\,\delta(y)\,\delta(z)+2\,(\delta(x))^2\,y\,\delta(z)+2\,(\delta(x))^2\,\delta(y)\,z\big)\cdot\delta,\nonumber
\end{eqnarray}
Hence the left-hand side of (\ref{Jordan module identity}) is equal to
\begin{equation}
-2\,x\,\delta(x)\,\delta(y)\,\delta(z)+2\,(\delta(x))^2\,\delta(y)\,z.
\label{Jordan module identity left side}
\end{equation}
The terms at the right-hand side of (\ref{Jordan module identity}) can be written as follows
\begin{eqnarray}
2\,\Big\{\{X,Y\},\{X,Z\}\Big\} &=& \big(8\,x\,\delta(x)\,\delta(y)\,\delta(z)+4\,(\delta(x))^2\,y\,\delta(z)+4\,(\delta(x))^2\,\delta(y)\,z\big)\cdot\delta,\nonumber\\
2\,\Big\{X,\big\{Y,\{X,Z\}\big\}\Big\} &=& \big(10\,x\,\delta(x)\,\delta(y)\,\delta(z)+4\,(\delta(x))^2\,y\,\delta(z)+2\,(\delta(x))^2\,\delta(y)\,z\big)\cdot\delta.\nonumber
\end{eqnarray}
Hence the right-hand side of (\ref{Jordan module identity}) is the expression
\begin{equation}
-2\,x\,\delta(x)\,\delta(y)\,\delta(z)+2\,(\delta(x))^2\,\delta(y)\,z,\nonumber
\end{equation}
and comparing it with the expression at the left-hand side (\ref{Jordan module identity left side}) we conclude that the Jordan module identity is satisfied and this ends the proof. 
\end{proof}
\section{3-Lie superalgebra constructed by means of even derivation}

In the article \cite{Bai-Guo-Wu} the authors prove an important theorem which gives a method for constructing 3-Lie algebras by means of transposed Poisson algebras and their derivations. This theorem can be stated as follows
\begin{theorem}
    Let $(L,\cdot,[\,,\,])$ be a transposed Poisson algebra and $D$ be its derivation. Define a ternary operation on $L$ as follows
    \begin{equation}\label{Bai-ternary-bracket}
    [x,y,z]:=D(x)[y,z]+D(y)[z,x]+D(z)[x,y], \quad x,y,z\in L.
    \end{equation}
    Then $(L,[\,,\,,\,])$ is a 3-Lie algebra.
\end{theorem}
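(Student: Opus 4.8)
The plan is to check the two axioms of a 3-Lie algebra for the bracket (\ref{Bai-ternary-bracket}): total skew-symmetry and the Filippov--Jacobi identity.

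Skew-symmetry is immediate. The right-hand side of (\ref{Bai-ternary-bracket}) is a cyclic sum in $x,y,z$, so $[x,y,z]$ is invariant under cyclic permutations of its arguments; hence it suffices to verify one transposition. Interchanging $x$ and $y$ and applying the skew-symmetry of the Lie bracket to each of the three summands turns $[y,x,z]$ into $-[x,y,z]$, and total skew-symmetry follows.

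For the Filippov--Jacobi identity I would substitute the definition (\ref{Bai-ternary-bracket}) into both sides and reduce the resulting expressions using three tools, all available because $D$ is a derivation of the whole transposed Poisson algebra: the Leibniz rule for $D$ on the associative product, the Leibniz rule for $D$ on the Lie bracket, and the compatibility condition (\ref{comp-transposed}) together with its consequences, the (ungraded specialisations of the) identities (\ref{identity 1})--(\ref{identity 6}) from Theorem \ref{Theorem 2}. On the left, $[[x,y,z],u,v]=D([x,y,z])\,[u,v]+D(u)\,[v,[x,y,z]]+D(v)\,[[x,y,z],u]$; expanding $D([x,y,z])$ by the two Leibniz rules produces terms involving $D^2$ and terms in which $D$ falls inside a binary bracket, while $[v,[x,y,z]]$ and $[[x,y,z],u]$ are brought to a normal form by using (\ref{comp-transposed}) to rewrite each bracket of a product. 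The three summands on the right are expanded in the same way. One then matches the two sides in stages. The $D^2$-terms come first: on the right they collect, according to the element on which $D^2$ acts, either into combinations of the shape $[h,x][y,z]+[h,y][z,x]+[h,z][x,y]$ (with $h=u$ or $h=v$), which vanish by (\ref{identity 4}), or into terms $D^2(x)[y,z][u,v]$ and their cyclic companions, which coincide with the $D^2$-terms on the left by commutativity of the product. The surviving terms — each a $D$-image of an element multiplied by one or two binary brackets, possibly with a bracket of a product or a bracket of a bracket inside — are then grouped so that each group is an instance of one of (\ref{identity 1})--(\ref{identity 3}), (\ref{identity 5}), (\ref{identity 6}), which completes the verification.

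The principal obstacle is bookkeeping rather than insight: after the expansions each side is a sum of many monomials, and the work is in sorting them — conveniently by which of $u,v$ sits ``outside'' and by how the two occurrences of $D$ are distributed — so that the terms assemble into recognisable instances of the identities of Theorem \ref{Theorem 2}. No identity beyond those listed there is required; this is the argument of \cite{Bai-Guo-Wu}, and the graded version presented afterwards in this section differs from it only by the Koszul signs.
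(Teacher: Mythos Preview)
Your proposal is correct and follows essentially the same strategy as the paper. Note that the paper does not itself prove this ungraded statement (it is quoted from \cite{Bai-Guo-Wu}); the detailed argument given in the paper is for the super version, which specialises to this one by dropping the Koszul signs, exactly as you remark. The only organisational difference is that the paper first isolates, as Lemma~\ref{lem}, the cyclic identity $\sum_{\mathrm{cyc}}\bigl(D(x)\,D([y,z])+x\,[D(y),D(z)]\bigr)=0$ (proved directly from the compatibility condition and the Leibniz rule) and uses it to convert the $D(\cdot)\,D([\cdot,\cdot])$ terms coming from $D([x,y,z])\,[u,v]$ into the form $-x\,[D(y),D(z)]\,[u,v]-\cdots$ before matching; your description ``apply the Leibniz rule inside $D([y,z])$ and regroup'' amounts to the same manipulation. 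For the record, the paper's cancellations actually invoke only identities (\ref{identity 1}), (\ref{identity 2}), (\ref{identity 4}), and (\ref{identity 6}), not (\ref{identity 3}) or (\ref{identity 5}).
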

\noindent
Our aim in this section is to extend this result to the case of transposed Poisson superalgebra. We show that given a transposed Poisson superalgebra and an even derivation of this superalgebra we can construct a 3-Lie superalgebra. 

We will start with the following lemma. 

\begin{lemma}\label{lem}
If $({\cal P},\cdot,[\,,\,])$ is a transposed Poisson superalgebra and $D$ is an even derivation of a Lie superalgebra $({\cal P},[\,,\,])$ then 
\begin{eqnarray}
&&\!\!\!\!\!\!\!\!\!\!\! D(x)\!\cdot\! D([y,z])+(-1)^{|x,yz|}D(y)\!\cdot\! D([z,x])+(-1)^{|xy,z|}D(z)\!\cdot\! D([x,y])\nonumber\\ 
&&\!\!\!\!\!\!\!\! +x\!\cdot\! [D(y),D(z)]+(-1)^{|x,yz|} y\!\cdot\! [D(z),D(x)])+(-1)^{|xy,z|} z\!\cdot\! [D(x),D(y)]=0.\nonumber
\end{eqnarray}
\end{lemma}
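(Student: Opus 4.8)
The plan is to obtain the identity by specializing the graded cyclic relation (\ref{identity 1}) of Theorem \ref{Theorem 2} to three triples built from $x,y,z$ and their images under $D$, and then taking a suitable signed combination; note that only relation (\ref{identity 1}) and the fact that $D$ is an even derivation of the Lie superbracket are needed, so the fact that $D$ is assumed to be a derivation only of the bracket (and not necessarily of the dot product) causes no trouble. Since $D$ is even, $|D(a)|=|a|$ for homogeneous $a$, so all the Koszul signs below are formed from $|x|,|y|,|z|$ alone. Applying (\ref{identity 1}) to the triples $(D(x),D(y),z)$, $(D(x),y,D(z))$ and $(x,D(y),D(z))$ gives, respectively,
\begin{eqnarray*}
(-1)^{|x||z|}D(x)\cdot[D(y),z]+(-1)^{|x||y|}D(y)\cdot[z,D(x)]+(-1)^{|y||z|}z\cdot[D(x),D(y)] &=& 0,\\
(-1)^{|x||z|}D(x)\cdot[y,D(z)]+(-1)^{|x||y|}y\cdot[D(z),D(x)]+(-1)^{|y||z|}D(z)\cdot[D(x),y] &=& 0,\\
(-1)^{|x||z|}x\cdot[D(y),D(z)]+(-1)^{|x||y|}D(y)\cdot[D(z),x]+(-1)^{|y||z|}D(z)\cdot[x,D(y)] &=& 0.
\end{eqnarray*}

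Next I would multiply each of these three equalities by $(-1)^{|x||z|}$ and add them. After this normalization the three leading terms carry coefficient $1$, the three middle terms all carry $(-1)^{|x|(|y|+|z|)}=(-1)^{|x,yz|}$, and the three trailing terms all carry $(-1)^{(|x|+|y|)|z|}=(-1)^{|xy,z|}$. Regrouping the nine resulting summands by their left-most factor: the two with left factor $D(x)$ combine through $D([y,z])=[D(y),z]+[y,D(z)]$ into $D(x)\cdot D([y,z])$, while the lone $x\cdot[D(y),D(z)]$ remains; similarly the two with left factor $D(z)$ combine into $(-1)^{|xy,z|}D(z)\cdot D([x,y])$, with $(-1)^{|xy,z|}z\cdot[D(x),D(y)]$ left over.

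The only step that needs care is the pair with left factor $D(y)$: one first uses graded antisymmetry of the Lie superbracket to rewrite $[z,D(x)]$ and $[D(z),x]$, then the evenness of $D$, the net effect being $[z,D(x)]+[D(z),x]=-(-1)^{|x||z|}D([x,z])=D([z,x])$, so these two summands collapse to $(-1)^{|x,yz|}D(y)\cdot D([z,x])$, leaving $(-1)^{|x,yz|}y\cdot[D(z),D(x)]$. The six surviving contributions are precisely the six terms on the left-hand side of the asserted identity, which is therefore $0$. I expect the only real obstacle to be disciplined bookkeeping of Koszul signs — in particular, checking that the two coefficients produced by the normalization coincide with $(-1)^{|x,yz|}$ and $(-1)^{|xy,z|}$ and that the reduction $[z,D(x)]+[D(z),x]=D([z,x])$ carries no stray sign; once the signs are pinned down, everything else is a direct application of (\ref{identity 1}) together with the Leibniz rule for $D$ on the bracket.
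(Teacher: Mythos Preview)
Your proof is correct, and it takes a genuinely different route from the paper's. The paper proceeds directly from the axioms: it expands each summand $D(x)\cdot D([y,z])$ by first applying the Leibniz rule $D([y,z])=[D(y),z]+[y,D(z)]$ and then the compatibility condition (\ref{comp-super-transposed}) to split $D(x)\cdot[D(y),z]$ and $D(x)\cdot[y,D(z)]$ into four bracket terms each; after taking the signed cyclic sum over $x,y,z$, the twelve resulting terms recombine via (\ref{comp-super-transposed}) into $-x\cdot[D(y),D(z)]-(-1)^{|x,yz|}y\cdot[D(z),D(x)]-(-1)^{|xy,z|}z\cdot[D(x),D(y)]$. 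Your argument instead appeals to the already–established identity (\ref{identity 1}) of Theorem~\ref{Theorem 2}, specialized to the three triples $(D(x),D(y),z)$, $(D(x),y,D(z))$, $(x,D(y),D(z))$, and then uses only the Leibniz rule for $D$ on the bracket to collapse the pairs of $D(\cdot)$–terms. This is shorter and avoids unpacking the compatibility condition a second time, since (\ref{identity 1}) has already absorbed that work; the paper's approach, on the other hand, is self-contained from the defining relations and does not rely on Theorem~\ref{Theorem 2}. One minor remark: your detour through graded antisymmetry for the $D(y)$–pair is unnecessary, since $[z,D(x)]+[D(z),x]=D([z,x])$ is an immediate instance of the Leibniz rule with no sign manipulation required.
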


\begin{proof}
Since $D:{\cal P}\to {\cal P}$ is an even derivation of a Lie superalgebra $({\cal P},[\,,\,])$, it satisfies the graded Leibniz rule. Since it is an even derivation we also have $|D(x)|=|x|$ for any $x\in {\cal P}$. Making use of the graded Leibniz rule, the compatibility condition (\ref{comp-super-transposed}) and cyclic permutations of $x,y,z$ we get
\begin{eqnarray}
D(x)\cdot D([y,z]) = \frac{1}{2}\big{(}[D(x)\cdot D(y),z])+(-1)^{|x||y|}[D(y),D(x)\cdot z]
+[D(x)\cdot y,D(z)]+(-1)^{|x||y|}[y,D(x)\cdot D(z)]\big{)}\nonumber,\\
D(y)\cdot D([z,x])=
     \frac{1}{2}\big{(}[D(y)\cdot D(z),x])+(-1)^{|y||z|}[D(z),D(y)\cdot x]
      +[D(y)\cdot z,D(x)]+(-1)^{|y||z|}[z,D(y)\cdot D(x)]\big{)},\nonumber \\
D(z)\cdot D([x,y])=
     \frac{1}{2}\big{(}[D(z)\cdot D(x),y])+(-1)^{|x||z|}[D(x),D(z)\cdot y]
      +[D(z)\cdot x,D(y)]+(-1)^{|x||z|}[x,D(z)\cdot D(y)]\big{)}.\nonumber
\end{eqnarray}
Taking the sum of these equations multiplied by $(-1)^{|x,yz|}$ (second equation) and $(-1)^{|xy,z|}$ (third equation) we get the equation whose left-hand side is
\begin{equation}
D(x)\cdot D([y,z])+(-1)^{|x,yz|}D(y)\cdot D([z,x])+(-1)^{|xy,z|}D(z)\cdot D([x,y]),
\end{equation}
and the right-hand side can be written in the form
\begin{eqnarray}
&&-\frac{1}{2}\big([x\cdot D(y),D(z)]+(-1)^{|x||y|}[D(y),x\cdot D(z)]+(-1)^{|x,yz|}([y\cdot D(z),D(x)]+(-1)^{|y||z|}[D(z),y\cdot D(x)]\nonumber\\
    &&\qquad\qquad\qquad\;\;\;\;\;\;\;\;\;\;\;\;\;+(-1)^{|xy,z|}([z\cdot D(x),D(y)]+(-1)^{|x||z|}[D(x),z\cdot D(y)])\big).\nonumber
\end{eqnarray}
Making use of the compatibility condition for transposed Poisson superalgebra we can write the right-hand side in the form
$$
-x\cdot [D(y),D(z)]-(-1)^{|x,yz|} y\cdot [D(z),D(x)])-(-1)^{|xy,z|} z\cdot [D(x),D(y)],
$$
which ends the proof of lemma.
\end{proof}

\begin{theorem}
Let $({\cal P},\cdot,[\,,\,])$ be a transposed Poisson superalgebra and $D$ be its even derivation. Define the ternary bracket
\begin{equation}\label{ternary bracket with D}
[x,y,z]:=D(x)\cdot [y,z]+(-1)^{|x,yz|}D(y)\cdot[z,x]+(-1)^{|xy,z|}D(z)\cdot[x,y],
\end{equation}
where $x,y,z\in {\cal P}.$ Then $({\cal P},[\,,\,,\,])$ is a 3-Lie superalgebra.
\end{theorem}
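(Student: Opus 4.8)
The plan is to verify in turn the three defining properties of a 3-Lie superalgebra from the second Definition: the parity compatibility (\ref{parity-compatibility}), the graded skew-symmetry (\ref{ternary-skew-symmetry}), and the super Filippov--Jacobi identity (\ref{Filippov-Jacobi}). The first two are immediate. For (\ref{parity-compatibility}), since $D$ is even we have $|D(x)\cdot[y,z]|=|x|+(|y|+|z|)$, and the two cyclic summands in (\ref{ternary bracket with D}) have the same total parity, so $|[x,y,z]|=|x|+|y|+|z|$. For (\ref{ternary-skew-symmetry}) I would substitute the definition (\ref{ternary bracket with D}), apply the graded antisymmetry $[a,b]=-(-1)^{|a||b|}[b,a]$ of the Lie superbracket inside each of the three terms, and relabel; the Koszul prefactors $(-1)^{|x,yz|}$ and $(-1)^{|xy,z|}$ attached to the cyclic terms are precisely arranged so that interchanging two of the three arguments of $[\,,\,,\,]$ reproduces the sign required by (\ref{ternary-skew-symmetry}).

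The substantial part is the super Filippov--Jacobi identity (\ref{Filippov-Jacobi}). I would expand each of the four ternary brackets in (\ref{Filippov-Jacobi}) by (\ref{ternary bracket with D}). In $[[x,y,z],u,v]$ the outer bracket produces $D([x,y,z])\cdot[u,v]$ together with $D(u)\cdot[v,[x,y,z]]$ and $D(v)\cdot[[x,y,z],u]$ with the appropriate signs; I would then apply the graded Leibniz rule for $D$ to $D([x,y,z])$, using $[x,y,z]=D(x)\cdot[y,z]+\dots$, and likewise feed the inner $[x,y,z]$ into the remaining two brackets. After a complete expansion every monomial occurring in (\ref{Filippov-Jacobi}) falls into one of a few recognizable shapes: (i) terms containing $D^2$ of a single argument, (ii) terms of the form $D(\,\cdot\,)\cdot D(\,\cdot\,)\cdot[\,\cdot\,,\,\cdot\,]$, (iii) terms of the form $(\text{argument})\cdot[D(\,\cdot\,),D(\,\cdot\,)]\cdot(\dots)$, and (iv) iterated-bracket terms of the shapes $[h\cdot[x,y],z]$, $[h\cdot x,[y,z]]$ and $[h,x]\cdot[y,z]$ appearing in Theorem~\ref{Theorem 2}. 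I would collect the contributions of all four ternary brackets into these four groups and show that each group vanishes separately.

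The cancellations are exactly what the auxiliary results are designed to supply: the $D^2$-terms of type (i) together with the mixed terms of types (ii) and (iii) are killed by Lemma~\ref{lem} (applied with the arguments permuted as needed), while the groups of iterated-bracket terms of type (iv), after rewriting products $D(a)\cdot[b,c]$ via the compatibility condition (\ref{comp-super-transposed}), vanish by the graded Bai--Guo--Wu identities (\ref{identity 1})--(\ref{identity 6}) of Theorem~\ref{Theorem 2} and the graded Jacobi identity of the Lie superalgebra $({\cal P},[\,,\,])$. The \emph{main obstacle} is purely the bookkeeping: correctly tracking the Koszul signs through every Leibniz expansion and matching each collected group to the right instance of an identity from Theorem~\ref{Theorem 2}. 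I would tame it by fixing once and for all the abbreviations $|yz,uv|$, $|x,yz|$, $|xy,zuv|$ introduced after (\ref{Filippov-Jacobi}), expanding the left-hand side $[[x,y,z],u,v]$ of (\ref{Filippov-Jacobi}) completely, and then checking term by term that the three expanded right-hand terms, weighted by their prefactors, reproduce it after the applications of Lemma~\ref{lem} and Theorem~\ref{Theorem 2}.
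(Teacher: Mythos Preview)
Your outline is essentially the paper's own proof: expand all four ternary brackets by (\ref{ternary bracket with D}), apply the graded Leibniz rule, invoke Lemma~\ref{lem} on the $D(\,\cdot\,)\cdot D([\,\cdot\,,\,\cdot\,])$ combinations arising from $D([x,y,z])\cdot[u,v]$, and then kill what remains with the identities of Theorem~\ref{Theorem 2}. Two small points of calibration. First, the four ``types'' you list do not vanish \emph{separately}: the $D^2(x),D^2(y),D^2(z)$ contributions cancel between the left-hand side and the first terms of the three right-hand expansions, whereas the $D^2(u),D^2(v)$ contributions live only on the right and are annihilated by identity~(\ref{identity 4}); similarly the final step is not a group-by-group vanishing but a matching, via identity~(\ref{identity 6}), of the surviving $D(u)\cdot[v,D(\,\cdot\,)\cdot[\,\cdot\,,\,\cdot\,]]$ terms on the left against pairs of terms on the right. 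Second, you do not need to re-invoke the compatibility condition (\ref{comp-super-transposed}) directly in the Filippov--Jacobi verification: the paper absorbs that use entirely into Lemma~\ref{lem} and the pre-proved identities (\ref{identity 2}), (\ref{identity 4}), (\ref{identity 6}), and the graded Jacobi identity is never called on explicitly.
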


\begin{proof}
It is easy to verify that the ternary bracket (\ref{ternary bracket with D}) is trilinear and graded skew-symmetric. Hence in order to prove the theorem we have to prove the super Filippov-Jacobi identity (\ref{Filippov-Jacobi}). We begin with the left-hand side of (\ref{Filippov-Jacobi}). Applying the definition of ternary bracket (\ref{ternary bracket with D}) we get
\begin{equation*}
    \begin{split}
        [&[x,y,z],u,v] =  D\Big(D(x)[y,z]\Big)[u,v]+(-1)^{|xyz,uv|}D(u)[v,D(x)[y,z]]+(-1)^{|v,xyzu|}D(v)[D(x)[y,z],u] \\
        &\quad\quad\quad+ (-1)^{|x,yz|} D\Big(D(y)[z,x]\Big)[u,v]+(-1)^{|xyz,uv|+|x,yz|}D(u)[v,D(y)[z,x]] \\
        &\quad\quad\quad+(-1)^{|v,xyzu|+|x,yz|}D(v)[D(y)[z,x],u]  
        + (-1)^{|xy,z|} D\Big(D(z)[x,y]\Big)[u,v] \\ 
        &\quad\quad\quad+(-1)^{|xyz,uv|+|xy,z|}D(u)[v,D(z)[x,y]]+(-1)^{|v,xyzu|+|xy,z|}D(v)[D(z)[x,y],u].
    \end{split}
\end{equation*}
Making use of the graded Leibniz rule for even derivation $D$ we obtain 
\begin{equation*}
    \begin{split}
    D\Big(D(x)[y,z]\Big)[u,v] &= D^2(x)[y,z][u,v]+\underline{D(x)D([y,z])}[u,v], \\
    (-1)^{|x,yz|}D\Big(D(y)[z,x]\Big)[u,v] &= (-1)^{|x,yz|}D^2(y)[z,x][u,v]+\underline{(-1)^{|x,yz|}D(y)D([z,x])}[u,v], \\
    (-1)^{|xy,z|}D\Big(D(z)[x,y]\Big)[u,v] &= (-1)^{|xy,z|}D^2(z)[x,y][u,v]+\underline{(-1)^{|xy,z|}D(z)D([x,y])}[u,v].
    \end{split}
\end{equation*}
Note that the sum of underlined terms is equal to the sum of the first three terms in equation in Lemma \ref{lem} multiplied by $[u,v]$. Hence using the identity of Lemma \ref{lem} we can write this sum as follows
\begin{equation*}
\begin{split}
    D(x) D([y,z])[u,v]+(-1)^{|x,yz|}D(y)D([z,x])[u,v]+(-1)^{|xy,z|}D(z)D([x,y])[u,v] &=\\ 
-x [D(y),D(z)][u,v]-(-1)^{|x,yz|} y [D(z),D(x)][u,v]-(-1)^{|xy,z|} z[D(x),&D(y)][u,v].
\end{split}
\end{equation*}
Now the left-hand side of super Filippov-Jacobi identity can be written in the following form
\begin{equation}\label{left-side}
    \begin{split}
         & D^2(x)[y,z][u,v]+ (-1)^{|x,yz|}D^2(y)[z,x][u,v] + (-1)^{|xy,z|} D^2(z)[x,y][u,v] \\
        &  + (-1)^{|xyz,uv|}D(u)\big( [v,D(x)[y,z]]+(-1)^{|x,yz|}[v,D(y)[z,x]] + (-1)^{|xy,z|}[v,D(z)[x,y]] \big) \\
        &+ (-1)^{|xyzu,v|}D(v)\big( [D(x)[y,z],u] + (-1)^{|x,yz|}[D(y)[z,x], u] + (-1)^{|xy,z|}[D(z)[x,y],u] \big) \\
        &- x[D(y),D(z)][u,v] - (-1)^{|x,yz|}y[D(z),D(x)][u,v] - (-1)^{|xy,z|}z[D(x),D(y)][u,v]. 
    \end{split}
\end{equation}
We label every term of this expression by a pair (L, $n$), where L stands for the "left-hand side of the super Filippov-Jacobi identity" and $n$ is the positional number of the term in this expression. For example, (L, 2) is a label for $(-1)^{|x,yz|}D^2(y)[z,x][u,v]$. 

Now we calculate the terms at the right-hand side of the super Filippov-Jacobi identity (\ref{Filippov-Jacobi}).  We get

\begin{equation}\label{right-hand-1}
    \begin{split}
        (-1)^{|yz,uv|}&[[x,u,v],y,z]= D^2(x)[y,z][u,v]-(-1)^{|yz,uv|}x[D(u),D(v)][y,z] \\
    &+(-1)^{|x,yz|}D(y)[z,D(x)[u,v]] +(-1)^{|xy,z|+|y,uv|}D(z)[D(x)[u,v],y] \\
    &+ (-1)^{|xyz,uv|}D^2(u)[v,x][y,z]- (-1)^{|xyz,uv|}u [D(v),D(x)])[y,z] \\
    &+(-1)^{|x,yzuv|}D(y)[z,D(u)[v,x]]+(-1)^{|xy,zuv|}D(z)[D(u)[v,x],y] \\
    &+ (-1)^{|xu,v|+|yz,uv|} D^2(v)[x,u][y,z]-(-1)^{|xu,v|+|yz,uv|}v[D(x),D(u)][y,z] \\
    &+(-1)^{|x,yz|+|xu,v|}D(y)[z,D(v)[x,u]] +(-1)^{|xy,z|+|xu,v|+|y,uv|}D(z)[D(v)[x,u],y].
    \end{split}
\end{equation}

\begin{equation}\label{right-hand-2}
    \begin{split}
        (-1)^{|x,yz|+|xz,uv|}[&[y,u,v],z,x]= (-1)^{|x,yz|}D^2(y)[z,x][u,v] -(-1)^{|x,yz|+|xz,uv|}y[D(u),D(v)][z,x] \\
    &+(-1)^{|xy,z|}D(z)[x,D(y)[u,v]] +(-1)^{|z,uv|}D(x)[D(y)[u,v],z] \\ &+ (-1)^{|xyz,uv|+|x,yz|}D^2(u)[v,y][z,x] - (-1)^{|xyz,uv|+|x,yz|}u [D(v),D(y)])[z,x] \\ &+(-1)^{|xy,z|+|y,uv|}D(z)[x,D(u)[v,y]] +(-1)^{|yz,uv|}D(x)[D(u)[v,y],z] \\ &+ (-1)^{|x,yz|+|u,xzv|+|xyz,v|} D^2(v)[y,u][z,x] -(-1)^{|x,yz|+|u,xzv|+|xyz,v|}v[D(y),D(u)][z,x] \\ &+(-1)^{|xy,z|+|yu,v|}D(z)[x,D(v)[y,u]] +(-1)^{|z,uv|+|yu,v|}D(x)[D(v)[y,u],z].
    \end{split}
\end{equation}

\begin{equation}\label{right-hand-3}
    \begin{split}
        (-1)^{|xy,zuv|}&[[z,u,v],x,y]= (-1)^{|xy,z|}D^2(z)[x,y][u,v]-(-1)^{|xy,zuv|}z[D(u),D(v)][x,y] \\
    &+D(x)[y,D(z)[u,v]] +(-1)^{|x,yzuv|}D(y)[D(z)[u,v],x] \\
    &+ (-1)^{|xyz,uv|+|xy,z|}D^2(u)[v,z][x,y]- (-1)^{|xyz,uv|+|xy,z|}u [D(v),D(z)])[x,y] \\
    &+(-1)^{|z,uv|}D(x)[y,D(u)[v,z]]+(-1)^{|z,xuv|+|x,yuv|}D(y)[D(u)[v,z],x] \\
    &+ (-1)^{|zu,xyv|+|xy,v|} D^2(v)[z,u][x,y]-(-1)^{|zu,xyv|+|xy,v|}v[D(z),D(u)][x,y] \\
    &+(-1)^{|zu,v|}D(x)[y,D(v)[z,u]]+(-1)^{|x,yzuv|+|zu,v|}D(y)[D(v)[z,u],x].
    \end{split}
\end{equation}

Analogously we label terms of these expressions by pairs $(m,n)$, where $m$ is a Roman number I, II or III, where I stands for (\ref{right-hand-1}), II for (\ref{right-hand-2}) and III for (\ref{right-hand-3}), and $n$ is the positional number of a term in corresponding expression. For example, the pair (I, 1) denotes the term $(-1)^{|yz,uv|}D^2(x)[u,v][y,z]$. We see that the terms (I,1) and (L, 1), (II,1) and (L,2), (III,1) and (L,3) cancel. Our next goal is to show that the remaining terms containing a square of a derivation $D$ can be also cancelled. First collect all the terms with $D^2(u)$
\begin{equation*}
    \begin{split}
         (-1)^{|xyz,uv|}D^2(u)\Big( [v,x][y,z]+(-1)^{|x,yz|}[v,y][z,x]+(-1)^{|z,xy|}[v,z][x,y] \Big).
    \end{split}
\end{equation*}
This sum is equal to zero due to identity (\ref{identity 4}). Analogously all the terms with $D^2(v)$ can be canceled. The sum of the terms containing $[D(u),D(v)]$ is equal to zero due to identity (\ref{identity 1}). Now making use of identity (\ref{identity 2}) we obtain
\begin{equation*}
    \begin{split}
        \text{(III,12)}+\text{(I,11)}=-(-1)^{|xzu,v|+|x||y|}D(y)[D(v)[x,z],u], \; 
        \text{(II,7)}+\text{(I,8)}=-(-1)^{|xy,zu|}D(z)[D(u)[x,y],v],
    \end{split}
\end{equation*}
\begin{equation*}
    \begin{split}
        \text{(II,12)}+\text{(III,11)}=-(-1)^{|yzu,v|+|y||z|}D(x)[D(v)[z,y],u], \; \text{(I,7)}+\text{(III,8)}=-(-1)^{|x,yzu|+|z||u|}D(y)[D(u)[z,x],v],
    \end{split}
\end{equation*}
\begin{equation*}
    \begin{split}
        \text{(I,12)}+\text{(II,11)}=-(-1)^{|x,yzv|+|y,zv|+|u||v|}D(z)[D(v)[y,x],u], \; \text{(III,7)}+\text{(II,8)} = -(-1)^{|yz,u|}D(x)[D(u)[y,z],v].
    \end{split}
\end{equation*}

After all cancellations at the right-hand side of the super Filippov-Jacobi identity we have the following terms 
\begin{equation}\label{RHS}
    \begin{split}
         \underline{-(-1)^{|yz,u|}D(x)[D(u)[y,z],v]} -\dashuline{(-1)^{|x,yzu|+|z||u|}D(y)[D(u)[z,x],v]}& \\
         \mathdotuline{black}{-(-1)^{|xy,zu|}D(z)[D(u)[x,y],v]}& \\
         \mathunderline{red}{-(-1)^{|yzu,v|+|y||z|}D(x)[D(v)[z,y],u]}-\mathdashuline{red}{(-1)^{|xzu,v|+|x||y|}D(y)[D(v)[x,z],u]}& \\
         \mathdotuline{red}{-(-1)^{|x,yzv|+|y,zv|+|u||v|}D(z)[D(v)[y,x],u]}& \\
         +D(x)[y,D(z)[u,v]] + (-1)^{|x,yz|}D(y)[z,D(x)[u,v]]& \\
         \mathdotuline{white}{+ (-1)^{|xy,z|}D(z)[x,D(y)[u,v]]}& \\
         +(-1)^{|z,uv|}D(x)[D(y)[u,v],z]+ (-1)^{|x,yzuv|}D(y)[D(z)[u,v],x]& \\ + \mathdotuline{white}{(-1)^{|xy,z|+|y,uv|}D(z)[D(x)[u,v],y]}& \\
        \mathunderline{red}{-(-1)^{|xyz,uv|}u\big( [D(v),D(x)][y,z]} +\mathdashuline{red}{(-1)^{|x,yz|}[D(v),D(y)][z,x]}& \\ \mathdotuline{red}{+ (-1)^{|xy,z|}[D(v),D(z)][x,y]}& \big) \\
         +\underline{(-1)^{|xyz,uv|+ |u||v|}v\big( [D(u),D(x)][y,z]} + \mathdashuline{black}{(-1)^{|x,yz|}[D(u),D(y)][z,x]}& \\ + \mathdotuline{black}{(-1)^{|xy,z|}[D(u), D(z)][x,y]}& \big).
    \end{split}
\end{equation}
The left-hand side has the form
\setulcolor{red}
\begin{equation}\label{LHS}
    \begin{split}
        & \underline{(-1)^{|xyz,uv|}D(u)\big( [v,D(x)[y,z]]}+\dashuline{(-1)^{|x,yz|}[v,D(y)[z,x]]} + \dotuline{(-1)^{|xy,z|}[v,D(z)[x,y]]} \big) \\
        &+ \mathunderline{red}{(-1)^{|xyzu,v|}D(v)\big([D(x)[y,z],u]} + \mathdashuline{red}{(-1)^{|x,yz|}[D(y)[z,x], u]} + \mathdotuline{red}{(-1)^{|xy,z|}[D(z)[x,y],u]} \big) \\
        &- x[D(y),D(z)][u,v] - (-1)^{|x,yz|}y[D(z),D(x)][u,v] - (-1)^{|xy,z|}z[D(x),D(y)][u,v]. 
    \end{split}
\end{equation}
Making the following substitution $x\to D(u)$, $u\to v$, $y\to [y,z]$, $v\to D(x)$ in identity (\ref{identity 6}), we get 
\begin{equation}
    \begin{split}
        (-1)^{|xyz,uv|}D(u) [v,D(x)[y,z]] = -&(-1)^{|yz,u|}D(x)[D(u)[y,z],v] \\ 
        &+ (-1)^{|xyz,uv|+ |u||v|}v\,[D(u),D(x)][y,z]
    \end{split}
\end{equation}
In this equation the left-hand side is equal to the first term at the left-hand side of the super Filippov-Jacobi identity (\ref{LHS}) and the right-hand side can be found at the right-hand side of super Filippov-Jacobi identity (\ref{RHS}) (terms underlined by solid black line). Hence these terms can be cancelled. Analogously one can see that all underlined terms in (\ref{LHS}) are equal to the sum in the same way underlined terms in (\ref{RHS}).  

After all cancellations the left-hand side od the super Filippov-Jacobi identity has the form
\begin{equation}\label{left-remaining}
    -\mathunderline{blue}{ x[D(y),D(z)][u,v]}- \mathdashuline{blue}{ (-1)^{|x,yz|}y[D(z),D(x)][u,v]} - \mathdotuline{blue}{ (-1)^{|xy,z|}z[D(x),D(y)][u,v]}
\end{equation}
and the right-hand side of the same identity has the form
\begin{equation}\label{right-remaining}
    \begin{split}
        \mathdashuline{blue}{D(x)[y,D(z)[u,v]]} + \mathdotuline{blue}{(-1)^{|x,yz|}D(y)[z,D(x)[u,v]]}+ \mathunderline{blue}{(-1)^{|xy,z|}D(z)[x,D(y)[u,v]]}& \\ \mathdotuline{blue}{
         +(-1)^{|z,uv|}D(x)[D(y)[u,v],z]}+\mathunderline{blue}{ (-1)^{|x,yzuv|}D(y)[D(z)[u,v],x]}&\\ + \mathdashuline{blue}{ (-1)^{|xy,z|+|y,uv|}D(z)[D(x)[u,v],y]}.&
    \end{split}
\end{equation}
The terms that are underlined in the same way can be cancelled due to identity (\ref{identity 6}). This completes the proof of the super Filippov-Jacobi identity and the proof of the theorem.
\end{proof}

\section{CONCLUSIONS}
Given a commutative superalgebra and its even derivation, one can construct the transposed Poisson superalgebra using bracket (\ref{intr-graded-D-bracket}). In this paper we considered the question of how this approach can be extended to odd derivations of a commutative superalgebra. To solve this question, we need an analogue of bracket (\ref{intr-graded-D-bracket}) for the case of an odd derivation. This analogue is bracket (\ref{intr-graded-delta-bracket}). However this bracket is defined not on the elements of a commutative superalgebra but on the left supermodule over this algebra generated by an odd derivation, and we cannot omit an odd derivation (as in the case of an even derivation) without breaking the consistency of the parities. This leads us to consider a supermodule over a commutative superalgebra. We equip this supermodule with the bracket (\ref{intr-graded-delta-bracket}). This bracket induces a Jordan algebra structure on the subspace of even elements and a Jordan module structure on the subspace of odd elements. Thus, we come to the conclusion that if our goal is to treat odd derivations within a structure similar to a transposed Poisson superalgebra, we must, first, pass from a commutative superalgebra to a supermodule over this superalgebra and, second, use not a Lie superalgebra but a Jordan algebra and a Jordan module over this algebra. The latter structure is close in its properties to the notion of a Jordan superalgebra, and such a conjecture was made in \cite{Roger}. However, our calculations showed that the super Jordan identity does not hold in this case.






\begin{summary}{Transponeeritud Poissoni superalgebra}{Viktor Abramov, Nikolai Sovetnikov}
\"Uhe olulise n\"aite transponeeritud Poissoni algebrast saab konstrueerida kommutatiivse algebra ja selle derivatsiooni abil. Seda l\"ahenemist saab laiendada superalgebratele, see t\"ahendab, et saab konstrueerida transponeeritud Poissoni superalgebra, kui on antud kommutatiivne superalgebra ja selle paarisderivatsioon. Selles artiklis n\"aitame, et paaritute derivatsioonide kaasamine selle l\"ahenemisviisi raamistikku n\~ouab uue struktuuri defineerimist. See on supervektorruum kahe tehtega, mis rahuldavad transponeeritud Poissoni superalgebra koosk\~ola tingimust. Esimese tehe m\"a\"arab vasakpoolne supermoodul \"ule kommutatiivse superalgebra ja teine on Jordani sulg. Seej\"arel t\~oestatakse, et kommutatiivse superalgebra paaritu derivatsiooni poolt tekitatud supervektorruumi korral k\~oik eespool mainitud struktuuri tingimused on t\"aidetud. On n\"aidatud, kuidas konstrueerida 3-Lie-superalgebrat, kui on antud transponeeritud Poissoni superalgebra ja selle paarisderivatsioon.
\end{summary}

\end{document}